\newtheorem{thrm}{Theorem}
\newtheorem{proof}{Proof}
\begin{document}
\title{Secret Sharing Homomorphism and Secure E-voting}
\author{\IEEEauthorblockN{Binu V.P}
\IEEEauthorblockA{Department of Computer Applications\\
Cochin University\\
Kochi, India\\
binuvp@gmail.com}
\and
\IEEEauthorblockN{Divya G Nair}
\IEEEauthorblockA{Department of Computer Science \\
	Cochin University\\
	Kochi, India\\
	divyagnr@gmail.com}
\and
\IEEEauthorblockN{Sreekumar A}
\IEEEauthorblockA{Department of Computer Applications \\
Cochin University\\
Kochi, India\\
sreekumar@cusat.ac.in}
}
\maketitle
\begin{abstract}
Secure E-voting is a challenging protocol.Several approaches based on homomorphic crypto systems, mix-nets  blind signatures are proposed in the literature.But most of them need complicated homomorphic encryption which involves complicated encryption decryption process and key management which is not efficient.In this paper we propose a secure and efficient E-voting scheme based on secret sharing homomorphism.Here E-voting is viewed as special case of multi party computation where several voters jointly compute the result without revealing his vote.Secret sharing schemes are good alternative for secure multi party computation and are computationally efficient and secure compared with the cryptographic techniques.It is the first proposal, which  make use of the additive homomorphic property of the Shamir's secret sharing scheme and the encoding-decoding of votes to obtain the individual votes obtained by each candidates apart from the election result.We have achieved  integrity and privacy while keeping the efficiency of the system.
\end{abstract}
Keywords:E-voting, homomorphism,multi-party computation,secret sharing,individual votes
\section{Introduction}
% no \IEEEPARstart
Voting is a distributed decision making process involving several people.Each participant called the voter casts a vote and the computations are performed  on the vote casts by different voters to select the preferred item.Voting can be modeled as a secure multi party computation system, since multiple parties submit input and obtain the result without knowing any details of other inputs.

The  process involved in traditional election is quite tedious, time and resource consuming. To overcome these difficulties E-voting system  is introduced.The evolving new technologies made e-voting practical.But the research in this direction has to go a long way.The reliability and security are the major challenge.E-voting provides a lot of benefits compared with traditional voting.It avoids the requirement of geographical proximity of users.The cost can be greatly reduced because the resources can be reused. The use of E-voting must satisfy the security requirements such as authentication, voter privacy, confidentiality, integrity, etc.The security flaws make E-voting vulnerable than traditional system.Gritzalis et al \cite{gritzalis2002principles},\cite{gritzalis2003secure} mentioned the requirements of a secure E-voting system.

Confidentiality,Authenticity,Integrity and Verifiability are the major security requirements in E-voting scenario.Confidentiality ensures that nobody knows whom the voter is voted.Authentication is an important process where each voter must be identified as a person he claims to be and he should not be allowed to vote again.Integrity of the votes are also important.The system should ensure that the votes are valid and any modification must be detected.Verifiability means any one can verify at later time that, the voting is properly performed or his vote was properly registered and has been taken into account in the final tally \cite{fujioka1993practical}.

The are several proposal for efficient secret ballot elections based on mix-nets \cite{park1994efficient} \cite{sako1995receipt} \cite{jakobsson1998practical}, homomorphic encryption \cite{cohen1985robust} \cite{sako1994secure} \cite{benaloh1987verifiable} \cite{benaloh1994receipt} \cite{cramery1997secure} and blind signatures \cite{fujioka1993practical} \cite{okamoto1998receipt}.
There are different methods addressing the security and reliability of the E-voting scheme.Most of the approaches are based on cryptography.The major objective is to protect the voters identity from the vote.Secure E-voting using Blind Signature is proposed in \cite{ibrahim2003secure}.RSA \cite{rivest1978method} and Blind signatures are the major cryptographic algorithms involved\cite{camenisch1995blind},\cite{chaum1983blind}.Homomorphic encryption techniques are used in several implementations \cite{peng2005multiplicative}.Several proposal for verifiable secret sharing schemes are also given in \cite{benaloh1987verifiable},
\cite{benaloh1994receipt}.
Several modifications and use of homomorphic encryption and verifiable secret shuffle are mentioned in \cite{lee2000receipt}
\cite{hirt2000efficient} \cite{neff2001verifiable}.
Malkhi et al \cite{malkhi2003voting} in 2003 gave constructions without cryptographic technique which uses secret sharing homomorphism.Iftene \cite{iftene2007general} in 2007 proposed a general secret sharing for E-voting using Chinese remainder theorem . Pailliar's crypto system and its application to voting is proposed by Damagaard et al \cite{damgaard2010generalization} in 2010.Discrete logarithm problem and secret sharing are used by Chen et al \cite{chen2014secure} in 2014.Scheme with enhanced confidentiality and privacy is suggested by Pan et al \cite{pan2014enhanced} in 2014.

Secret sharing and many variations of its form an important primitive in several security protocols and applications.In the proposed method we make of Shamir's \cite{shamir1979} secret sharing techniques and its additive homomorphism property for efficient E-voting and vote tallying.Hence it avoids the complicated encryption decryption process and key management.The secret sharing schemes are originally proposed by Shamir \cite{shamir1979} and Blackley \cite{blakley1979} in 1979.
The motivation was to safeguard cryptographic keys.	The secret keys are stored at several locations as shares and when authorized number of users collaborate together, they can retrieve the secret.This provides both security,reliability and trust.The shares are information theoretically secure and provides no information about the key.The schemes are $(t,n)$ threshold schemes where any $t$ number of users can collaborate to recover the secret out of $n$ users.Less than $t$ users cannot obtain any information about the secret.Secret sharing scheme is perfect if less than $t$ shares gives no information about the secret.The secret sharing scheme is ideal if the share size is same as the secret size.Shamir's scheme is perfect and also ideal.It is easy to implement and is based on polynomial interpolation.Blackley's scheme is not perfect.It is based on hyperplane geometry and it is difficult to implement.There are other schemes based on boolean operations \cite{kuri2009xor} \cite{wang2007ssboln} and number theory \cite{mignotte1983} \cite{asmuth1983}.

Properties of polynomials give Shamir's scheme a $(+,+)$ homomorphic property.The secret domain and the share domain is same ( integers modulo $p$).There are other schemes \cite{asmuth1983modular},\cite{kothari1985generalized} which also have $(+,+)$ homomorphism, we consider Shamir's scheme for the ease of implementation and also it is perfect and information theoretically secure.
The homomorphism also provides verifiable secret sharing.It is very important in secure multi party computation.The first proposal of verifiable secret sharing was done by Chor et al \cite{chor1985verifiable}.In secret sharing not only the participant but also the dealer may be malicious.So the participant must be able to verify whether the shares are consistent.A set of $n$ shares is $t$ consistent if every subset of $t$ of the $n$ shares defines the same secret.Publicly verifiable secret sharing scheme's are introduced by Stadler in 1996 \cite{stadler1996publicly}.Schoenmakers \cite{schoenmakers1999simple} in 1999 proposed a publicly verifiable secret sharing scheme(PVSS) with applications to E-voting.The scheme is better than \cite{cramery1997secure},\cite{cohen1985robust}. The issue of homomorphic secret sharing for PVSS is also discussed.An efficient PVSS is suitable for the secure implementation of E-voting.

\section{Preliminaries}

The proposed scheme uses Shamir's threshold $(t,n)$ secret sharing scheme \cite{shamir1979}, which is having all the required property for efficient implementation of E-voting.The cryptographic techniques used for E-voting makes use of several mathematical assumptions.The use of secret sharing homomorphism avoids these problems and provides perfect secrecy.This section explores Shamir's secret sharing technique and the use of homomorphism property for e-voting.

\subsection{Shamir's Secret Sharing Scheme}

Let $n,t \in \mathbb{N}, t\le n$, a $(t,n)$ secret sharing protocol allows the shares of the secret to be distributed to $n$ participants and any $t$ of them can collaborate to retrieve the secret.Shamir \cite{shamir1979} uses a polynomial based construction for the implementation of  $(t,n)$ threshold scheme.It is based on the following theorem.

\begin{thrm}
	Let $l,t \in \mathbb{N}$.Also let $x_i,y_i \in \mathbb{Z}/p\mathbb{Z}, 1 \le i \le l$, where $p$ is prime and $x_i$ are pairwise distinct.Then there exist $p^{t-l}$ polynomial $q \in \mathbb{Z}/p\mathbb{Z}[X]$ of degree $\le t-1$
	with $q(x_i)=y_i,\; 1 \le i \le l$
\end{thrm}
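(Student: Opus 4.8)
The plan is to treat this as a counting problem in linear algebra over the field $\mathbb{F}_p = \mathbb{Z}/p\mathbb{Z}$. First I would observe that the polynomials $q \in \mathbb{Z}/p\mathbb{Z}[X]$ of degree $\le t-1$ form a vector space $V$ over $\mathbb{F}_p$ of dimension $t$ (a basis being $1, X, \ldots, X^{t-1}$), so that $|V| = p^t$. The $l$ constraints $q(x_i) = y_i$ are encoded by the evaluation map $\phi : V \to \mathbb{F}_p^{\,l}$, $\phi(q) = (q(x_1), \ldots, q(x_l))$, which is $\mathbb{F}_p$-linear; the polynomials we wish to count are then exactly the fiber $\phi^{-1}(y_1, \ldots, y_l)$.

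The key step is to show that $\phi$ is \emph{surjective}. Here I would invoke Lagrange interpolation: because the $x_i$ are pairwise distinct and $l \le t$, for any target vector $(y_1, \ldots, y_l)$ the polynomial
\[ q(X) = \sum_{i=1}^{l} y_i \prod_{j \ne i} \frac{X - x_j}{x_i - x_j} \]
has degree $\le l-1 \le t-1$ and satisfies $q(x_i) = y_i$. The denominators $x_i - x_j$ are invertible in $\mathbb{F}_p$ precisely because $p$ is prime and the $x_i$ are distinct, so the construction is well defined over the field; hence $q \in V$ and $\phi(q) = (y_1, \ldots, y_l)$, which proves surjectivity and, in particular, that the fiber is nonempty.

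Given surjectivity, the rank--nullity theorem finishes the count: $\dim \ker \phi = \dim V - l = t - l$, so $|\ker \phi| = p^{\,t-l}$. Since the solution set $\phi^{-1}(y_1, \ldots, y_l)$ is a coset of $\ker \phi$ (it is nonempty by the previous step, and any two solutions differ by a kernel element), it has exactly $|\ker \phi| = p^{\,t-l}$ elements, which is the claimed count.

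I expect the only genuine obstacle to be justifying surjectivity cleanly, namely making sure the Lagrange construction lands in degree $\le t-1$ and that the field hypothesis is used exactly where the inverses $(x_i - x_j)^{-1}$ are needed. An alternative that sidesteps the evaluation-map framing would be to exhibit one Lagrange solution directly and then argue that every other solution arises by adding a multiple $\prod_{i}(X - x_i)\cdot r(X)$ with $\deg r \le t-1-l$; the $t-l$ free coefficients of $r$ each range independently over $\mathbb{F}_p$, again giving $p^{\,t-l}$ solutions. This uses the standard fact that a polynomial of degree $\le t-1$ vanishing at the $l$ distinct points $x_i$ must be divisible by the degree-$l$ product $\prod_i (X - x_i)$, a consequence of $\mathbb{F}_p[X]$ being a unique factorization domain.
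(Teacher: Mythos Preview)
Your proposal is correct and follows essentially the same approach as the paper: Lagrange interpolation for existence, then a linear-algebra dimension count over $\mathbb{F}_p$ to get $p^{t-l}$. The only cosmetic difference is that the paper writes out the $l\times t$ coefficient matrix explicitly and invokes the nonvanishing Vandermonde determinant to obtain rank $l$, whereas you phrase the same fact as surjectivity of the evaluation map and apply rank--nullity.
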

\begin{proof}
	The polynomial can be obtained from the points $(x_i,y_i)$ using Lagrange interpolation formula.
	$$
	q(X)=\sum\limits_{i=1}^{t}y_i \prod\limits_{j=1,j\ne i}^{t}\frac{x_j-X}{x_j-x_i}$$
	
It satisfies $q(x_i)=y_i, 1 \le i \le l$.

We can determine number of such polynomials of degree $ \le t-1$.
$$q(X)= \sum\limits_{i=0}^{t-1} a_iX^i \; ,\; a_i \in \mathbb{Z}/p\mathbb{Z},\; 0\le i \le t-1$$ 

We can obtain a system of linear equations from $q(x_i)=y_i, 1 \le i \le l$
$$
 \begin{pmatrix} 1 & x_1 & x_1^2 &\cdots & x_1^{t-1} \\ 
                 1 & x_2 & x_2^2 &\cdots & x_2^{t-1} \\
           \vdots  & \vdots & \vdots &\vdots & \vdots \\
                 1 & x_l & x_l^2 &\cdots & x_l^{t-1} 
                  \end{pmatrix} \times 
                  \begin{pmatrix}
                 a_0 \\
                 a_1\\
                 \vdots \\
                 a_{t-1} 
                   \end{pmatrix}= \begin{pmatrix}
                   y_1 \\
                   y_2\\
                   \vdots \\
                   y_l 
                   \end{pmatrix}
$$
Lets consider the coefficient matrix 
$$C=\begin{pmatrix} 1 & x_1 & x_1^2 &\cdots & x_1^{l-1} \\ 
1 & x_2 & x_2^2 &\cdots & x_2^{l-1} \\
\vdots  & \vdots & \vdots &\vdots & \vdots \\
1 & x_l & x_l^2 &\cdots & x_l^{l-1} 
\end{pmatrix}
$$

The coefficient matrix is a Vandermonde matrix \cite{bjorck1970solution}. Since the $x_i's$ are distinct the determinant $\prod\limits_{i \le i \le j \le l}^{}(x_i-x_j)$ is non zero and the coefficient matrix has rank $l$.Thus the kernal of the coefficient matrix has rank $t-l$.So the number of polynomials is $p^{t-l}$ since each coefficient can take any of the $p$ values.
\end{proof}
\  \\
The secret sharing protocol consist of following phases.\\
\subsubsection{Initialization}
In this phase, the dealer who wants to distribute the secret $K$ must choose a prime number $p$ larger than the secret and also $p$ must be greater than $n$, where $n$ is the number of participants.The dealer then choose different $x_i \in \mathbb{Z}/p\mathbb{Z},\; 1 \le i \le n$ corresponds to each participants.The $x_i's$ are then published.\\
\subsubsection{Secret Sharing}
The secret $K \in \mathbb{Z}/p \mathbb{Z}$ is distributed as shares to the participants securely in this phase.

The dealer chooses $a_i \in \mathbb{Z}/p \mathbb{Z},\; 1 \le i \le t-1$ and construct a polynomial $q(X)$ of degree $t-1$ such that the constant term $q(0)$ represents the secret.
$$ q(X)=K + \sum_{i=1}^{t-1} a_iX^i$$

The shares corresponds to each participants are constructed by evaluating the polynomial at corresponding $x_i$ values.$$y_i=q(x_i), \; 1 \le i \le n$$.

The shares,  $y_i$ are then distributed to $i$'th participants
securely $1 \le i \le n$.

\subsubsection{Secret Reconstruction}

When $t$ or more participants collaborate together,they can retrieve the secret $K$ by combining the shares.Let $y_i=q(x_i),\; 1 \le i \le t$ ,be the $t$ shares.Then by using Lagrange Interpolation, the polynomial of degree $t-1$ can be reconstructed from these $t$ points using the formula 

$$ 	q(X)=\sum\limits_{i=1}^{t}y_i \prod\limits_{j=1,j\ne i}^{t}\frac{x_j-X}{x_j-x_i}$$

As per theorem 1 , there is exactly one such polynomial of degree $\le t-1$. The participants can obtain the secret $K$ as
$$K=q(0)=\sum\limits_{i=1}^{t}y_i \prod\limits_{j=1,j\ne i}^{t}\frac{x_j}{x_j-x_i}$$

It is noted that less than $t$ share holders  get no information about the secret.All constant term are equally likely and is an element in the field.The scheme is information theoretically secure.

\subsection{Secret Sharing Homomorphism}

Secret sharing homomorphism introduced by Benaloh in 1987 \cite{benaloh1987secret}.It is noted that Shamir's scheme is additive homomorphic.He stated that any $t$ of the $n$ agents can determine the super secret and no conspiracy of fewer than $t$ agents can gain any information at all about any of the sub secrets. That is the sum of the shares of different sub secret when added up and then interpolate according to the threshold mentioned to obtain the master secret which is the sum of the sub secrets.He also mentioned the importance of secret sharing homomorphism to E-voting.

Shamir's secret sharing scheme has the (+,+) homomorphism property. For example, assume there are two secrets: $K_1$, $K_2$ and are shared using polynomials $g(X)$ and $f(X)$.If we add the shares $h(i)=g(i)+f(i),\quad 1 \le i \le n$ then each of these $h(i)$ can be treated as the shares corresponds to the secret $K_1+K_2$.The polynomial $h(X)=g(X)+ f(X)$ and $h(0)=K_1+K_2$.
But each voter choose 1 or 0 ( vote or no vote).The shares are send to $n$ tellers.Any $t$ of them can collaborate to retrieve the result back.

In case of PVSS, two operations are defined.One on the shares  $\oplus$, and the other operation $\otimes$ on the encrypted shares such that for all participants
$$E_i(s_i) \otimes E_i(s_i')=E(s_i\oplus s_i')$$
If the underlying secret sharing scheme is homomorphic then by decrypting the combined encrypted shares, the recovered secret will be equal to $s_i \oplus s_i'$.

\section{Proposed Scheme}

The proposed system is a modification of the existing electronic voting scheme's used in India.Currently electronic voting machines are used in polling booth.These machines are costly and also not reliable.We propose an alternative solution for this using Internet and secret sharing homomorphism.This add trust and reliability to the existing voting scheme by incorporating secret sharing based techniques.The secrecy of vote is an important issue.This needs to be addressed with ultimate care. In the current Electronic Voting System, when a vote is casted,the corresponding candidates data base entry is updated and it can be easily tracked.But in the proposed scheme, it is difficult to track the vote because the shares of the votes is added to all the servers. We also add trust to the existing scheme by maintaining more than one server to keep the voting details.We are not considering on-line verification of the authenticity of the voter as in general e-voting scheme.Here we assume that the polling officers in each polling booth has to do it manually using the electoral role.The major components of the proposed e-voting schemes are
\begin{enumerate}
	\item Voting Terminal
	\item Share Generation 
	\item Collection Centers
	\item Result Computation
\end{enumerate}

We have considered the user authentication process which is done manually.The voting takes place in a Polling station.A voter is allowed to vote after his identity is verified. A polling station may contain many voting terminals.The user interface shows a voting panel which contains the list of all contesting candidates and their party symbols. Voting panel is setup and managed by the Chief Election Officers.

The share generation module is responsible for receiving the vote casted by each voter and make shares of it according to the threshold secret sharing scheme.
The shares are generated according to the vote casted for each candidates.Each candidate vote is represented as an encoded binary code.So when a vote is casted, the shares of the decimal value corresponds to the encoded binary vote of each candidate is generated using the Shamir's secret sharing scheme.The number of bits in the encoded binary code corresponds to each candidate vote depends on the number of contesting candidates and also total number of voters.

Let us assume that there are $m$ candidates $C_1,C_2,\ldots,C_m$ and $n$ voters $V_1,V_2,\ldots,V_n$.Then the binary encoding of the vote corresponds to each candidate will consist of $(\lfloor log_2n \rfloor+1)\times m$ number of bits.Here we consider the fact that all voters may vote to the same candidate.So the number of bits required for the representation of votes for each candidate is equal to the number of bits required to represent the total number of voters which is $\lfloor log_2n \rfloor+1$.

The encoding of the vote corresponds to each contesting candidate is explained below with an example.Let us consider that there are three candidates and seven voters.So the total number of bits of each encoded vote will be nine.The bit pattern corresponds to the vote of each candidate is obtained by setting the corresponding bit $C_i$ to 1 in the code $00C_300C_200C_1$ and other bits $C_i$ to 0.For example the code corresponds to the vote of candidate $C_3$ is $001000000 (64)$.So depending on the vote casted, it is encoded into a decimal code of 1,8 or 64 respectively.This bit wise encoding helps in computing  the total votes obtained by each candidate using the additive homomorphism.
 
The encoded vote is then shared using Shamir's threshold secret sharing scheme.The shares are then send to different Collection Centres(CC).The Collection Centres are responsible for receiving and summing up the shares corresponds to each vote casted.We can set up the threshold and also set number of collection centres required.If there are $p$ collection centres $CC_1,
CC_2,\ldots,CC_p$ and a threshold $t<p$ is set so that we can get back the result from any $t$ collection centres.This provides trust and reliability.Based on the number of collection centres and threshold set up, Shamir's scheme can be used for a threshold $(t,p)$ secret sharing.A $t-1$ degree polynomial $q(x)$ is constructed with constant term representing the encoded vote value in decimal.The other coefficients are chosen randomly from the field $\mathbb{Z}_p$, where $p$ is larger than the encoded vote values and the number of participants.The shares are generated by evaluating the polynomial $q(x)$ at $p$ different values $x_1,x_2,\ldots,x_p$.These $x_1,x_2,\ldots,x_p$ values represent different collection centres known only to the Chief Election Officer and are kept secret.These shares are then send securely to the $p$ collection centres.Any $t$ of them can be used for result evaluation and verification.The shares look totally random and the collection centres have no idea regarding which secret (vote) share it is, from the share value.The share size is also same as the secret size and hence it provides information theoretical security.
Once all the collection centres receives the vote share, the voting terminal is intimated to receive the next vote or it is the confirmation that the vote is registered properly.

The collection centres are responsible  for summing up the shares they receive for vote tallying.Here the shares are always valid. They are generated automatically from the terminal program embedded.So there is no need
to check the consistency of the shares received by the collection centres.But proper measures must be taken for the secure and error free communication between the voting terminal and collection centres.Collection centres behave as group of authorized parties.In a real time voting scenario, a single machine can act as a collection centre by maintaining database which contains collection of shares.How ever in this case the collection centre must be trusted.We can maintain a hierarchy of collection  centres for collecting vote shares according to the geographical location which compute the local sum of shares. The local sum is then send to the top level collection centres which further add the sums of shares received from local collection centres.A separate communication module can be incorporated for the efficient and secure communication of shares.The collection centres can also keep the shares received from each polling booth
or booths belongs to the same area as a separate entity for the computation of region wise voting details.The strategy for share maintenance, number of collection centres etc can be determined based on the requirement.The implementation issues also depend on the hierarchal structure used.

The Result Computation module is responsible for computing and declaring the final result.From the sum of shares stored on collection centres, the result can be obtained using Lagrange Interpolation.If there are $p$ collection centres and a $(t,p)$ threshold secret secret sharing scheme is used , then any $t$ of the share sum from the collection centres can be used for computing the final result.These $t$ shares can be used to get back a $t-1$ degree polynomial $Q(x)$ and the encoded result will be $Q(0)$.The result is then decoded by converting $Q(0)$ into binary and then separating the bits corresponds to each candidates.The decimal equivalent of the separated bits represent the total vote obtained by each candidate.Based on this the result can be announced.

It is noted that the result computation cannot be performed by collection centres.They will just keep the share sum and a hash is computed which is then signed by the private keys of the collection centre.During the result computation,it can be verified for the integrity and authenticity.This result declaration module, is managed by  higher officials and only they know the different $x$ values used for each collection centre during the share generation.Any $t$ pairs of this $x$ values and the corresponding share sum, which is the $y$ values, the polynomial interpolation can be done.The result computation can be done with different combination of the share sum from $t$ different collection centres which adds reliability.The trust is maintained by the Shamir's scheme because less than $t$ collection centres cannot get any information about the final result.At least $t$ collection centres have to collate to get back the result.

\section{Proposed Algorithms}
The following algorithm only includes the core functionality required.Additional functionalities can be added depending on the requirement.Suitable hash algorithm and signature algorithm must be chosen for maintaining the integrity and authenticity.When the voting is finished the hash of final share sum of each collection centre $SCC_j$ can be computed using SHA(Secure Hash Algorithm) \cite{pub2014draft} and is digitally signed by the previously issued private keys of the collection centre.The election official can verify this for integrity and authenticity.
\RestyleAlgo{boxruled}
\begin{algorithm}
	\begin{scriptsize}
		\KwIn{Vote casted by the voters}
		\KwOut {Sum of the shares of the votes}
		\BlankLine
		Let $m$ denote the number of candidates and 
		$n$ denote number of voters.\\
		Set $V$ equals $(\lfloor log_2n \rfloor+1)\times m$ bits
		for encoding the votes. \\
		Choose an appropriate field $\mathbb{Z}_p.$
	
		\For {each vote $i=1:n$  }{
			enc\textunderscore vote = bin\textunderscore decimal(set\textunderscore bit ($V$)) \\
			\tcc{$V$ is set according to the vote casted }
			\tcc{enc\_vote is the encoded vote in decimal }
			Pick $t-1$ random numbers $a_1,a_2,a_3,\ldots ,a_{t-1}$ from $\mathbb{Z}_p$\\
			Construct the polynomial \\ $q(x)$=enc\textunderscore vote $+a_1x+a_2x^2+\cdots+a_{t-1}x^{t-1}$\\
		\For {$j=1:p$}{
				Generate share $V_{ij}=q(j)$ \\
				\tcc{ where $V_{ij}$ is the $j^{th}$ share of $i^{th}$ vote}
				Send the share $V_{ij}$ to $C_{j}^{th}$ collection centre \\ through a secure communication channel\\
			 }
		\For {each Collection Centre $j=1:p$ }{
			 	Sum of shares $SCC_j=SCC_j+V_{ij}$ }
			}
	\caption{E-Voting}
	\label{Alg:Evt}
	\end{scriptsize}
	\end{algorithm}
	
	\begin{algorithm}
	\begin{scriptsize}
		\KwIn{Share sum of the votes from collection centre}
		\KwOut {Votes obtained by each candidate}
		\BlankLine
		\For {each randomly chosen $t$ Collection Centre $j=1:t$}  {
			retrieve $SCC_j$ } 
		Interpolate using $SCC_j$ and corresponding $x_i$ values to obtain the  polynomial Q(x)\\
		Obtain the secret value $Q(0)$.\\
		Decode $Q(0)$ and obtain the binary representation.\\
		Each $(\lfloor log_2n \rfloor+1)$ bits will represent each candidates vote.\\
		Publish the final results.
		\BlankLine
		\BlankLine
		\caption{Result Computation}
		\label{Alg:result}
	\end{scriptsize}
\end{algorithm}	
\section{E-voting Example}
Let us assume that three people Alice,Bob and Charles are contesting in an election and there are seven voters.So the maximum vote each contestant can get is seven.Three bits are hence required for the representation of votes gained by each candidate and a total of nine bits for the representation of encoded votes corresponds to each candidate.

$m=3, n=7, V=9 bits$\\

A sample voting scenario is given below where six voters made the vote out of seven.
\begin{table}[ht]
	\small
		\caption{Example E-voting}
	% title of Table
	\centering
\begin{tabular}{|c|c|c|c|c|c|} \hline
 \multirow{2}{*}{Vote} & \multirow{2}{*} {Alice} &\multirow{2}{*}  {Bob} & \multirow{2}{*}{Charles} & \multicolumn{2}{c|}{Encoding Vote}\\
\cline{5-6}
    &   & & & Binary & Decimal \\
    \hline
	1 & \checked & 		   &  & 001000000 & 64 \\ 
	\hline 
	 2& 		    & \checked &  & 000001000 & 8 \\ 
	\hline 
	 3&  		& \checked &  &000001000   &  8\\ 
	\hline  
	4&\checked &  		   &  & 001000000 & 64 \\ 
	\hline 
	 5&  		&			&\checked  & 000000001 & 1 \\ 
	\hline
	 6 &\checked  &  &  & 001000000 & 64 \\ 
	\hline 
\end{tabular} 
\end{table}

The votes are encoded as shown in Table I corresponds to each contesting candidate.Lets choose a field $\mathbb{Z}_{257}$.We have considered a $(2,3)$ secret sharing scheme where any two shares can be used to reconstruct the secret .Every time a vote is casted, a random polynomial $q(x)$ of degree 1 are constructed with constant term as the encoded vote and the other coefficient are
 chosen randomly from $\mathbb{Z}_257$.Generate the three shares $CC_1,CC_2$ and $CC_3$ with $x_i$'s as 1,2 and 3.It is noted that the shares are random irrespective of the encoded vote.So the collection centre cannot derive any information about the secret from the shares they receive.The collection centre also compute the share sum $SCC_j$ from the shares they receive.Table II shows the random polynomials constructed,the corresponding shares generated and also the share sum in the sample run of the algorithm corresponds to $(2,3)$ secret sharing scheme.
\begin{table}[ht]
	\small
	\caption{Vote Sharing}
	% title of Table
	\centering
\begin{tabular}{|c|c|l|c|c|c|} \hline
	\multirow{2}{*}{Vote} & \multirow{2}{*} {enc \textunderscore vote} & \multirow{2}{*}  {q(x)} &  \multicolumn{3}{c|}{Shares}\\
	\cline{4-6}
   & & & $CC_1$ & $CC_2$ & $CC_3$ \\
	\hline
	1 & 64 & 233.x+64 & 40 & 16 & 249\\ 
	\hline 
  	2 & 8 & 157.x+8 & 165 & 65 & 222 \\ 
	\hline 
	3 & 8 & 78.x+8 & 86 & 164 & 242 \\ 
	\hline  
	4 & 64 & 255.x+64 & 62 & 60 & 58 \\ 
	\hline 
	5 & 1 & 217.x+1 & 218 & 178 & 138 \\ 
	\hline
	6 & 64 & 124.x+64& 188 & 55 & 179\\ 
	\hline 
	\multicolumn{3}{|c|}{share sum $SCC_j$} & 245 & 24 & 60 \\
	\hline
\end{tabular} 
\end{table}

The election result can be computed from the sum of shares $SCC_j$ maintained by  each collection centre using Lagrange interpolation.The polynomial
$Q(x)$ can be obtained using any two shares in the example using the Lagrange Interpolation formula as follows.
$$Q(x)= SCC_1. \frac{(x-x_2)}{(x_1-x_2)} + SCC_2. \frac{(x-x_1)}{(x_2-x_1)}$$

The final result depends on $Q(0)$ which is easily obtained by
$$Q(0) = SCC_1. \frac{(x_2)}{(x_2-x_1)} + SCC_2. \frac{(x_1)}{(x_1-x_2)}$$

Computation of results using different combination of shares $SCC_1:SCC_2,SCC_1:SCC_3 \;\mbox{and}\; SCC_2:SCC_3$ are shown in equation 1,2 and 3.The operations are carried out in $\mathbb{Z}_{257}$.It is noted that the reconstructed values are consistent.
\begin{eqnarray}
Q(0) &=  245.\frac{2}{2-1} + 24. \frac{1}{1-2} =&  209 \\
Q(0) &=  245.\frac{3}{3-1} + 60. \frac{1}{1-3} =&  209 \\
Q(0) &=  24.\frac{3}{3-2} + 60. \frac{2}{2-3}  =&  209 
\end{eqnarray}

The final result can be obtained by decoding the reconstructed result 209 into
binary.It is noted that 3 bits will represent vote secured by each candidate.

$$\mbox{decoded vote}:\; 011, 010, 001$$

The result can be published based on the obtained values which is shown below.

\begin{table}[ht]
	\small
	\caption{E-voting Result}
	% title of Table
	\centering
	\begin{tabular}{|c|c|} \hline
		 \textbf{Candidate}	& \textbf{Votes Secured} \\
		\hline
		Alice & 3 \\
		\hline
		Bob &   2 \\
		\hline
		Charles & 1 \\
		\hline
	\end{tabular} 
\end{table}

\section{Analysis}

Security in online election is a challenging task.Authenticating the voter is a major challenge along with the privacy of the vote.We have considered manual authentication and proposed a modification to the existing voting scheme which uses electronic voting machine.The voting machines are not reliable and also in certain situations where the number of candidates are more, more than one voting machine needs to be connected.The proposed scheme is cost effective and also reliable.

It is noted that the proposed algorithm mentioned is simple and effective and provides privacy to the vote casted.The shares are generated  by constructing a random polynomial and the share size is same as the encoded vote.The collection centers have no idea about how the votes are encoded, how many bits are used for encoding, which bits represents a particular candidate votes etc.The collection centres will receive a random value from the field $\mathbb{Z}_p$ from which no information about the secret vote can be obtained.The coalition of $t$ untrusted collection centers can obtain the result.But they doesn't have any knowledge about the number of collection centres,the threshold used and also what is the $x$ values assigned to each collection centre.In the example we have considered 1,2 and 3 for simplicity, however different $x$ values can be used and is kept secret.

Shamir's secret sharing scheme is information theoretically secure.It is perfect in the sense that no information can be obtained from less than the threshold number of participants.This adds trust to the existing E-voting scheme,because the computation of the result need participation of $t$ centers.The computation of the shares and the reconstruction of the final result using the share sum can be done using polynomial evaluation and interpolation.Efficient $O(n.log^2n)$ algorithms for polynomial evaluation and interpolation are mentioned in \cite{aho1974design},\cite{lloyd1982art}.Simple quadratic algorithms are sufficient because the number of shares generated is not too large.

The encoding and decoding of the votes can also be done easily.The codes for each candidates and also the number of bits required to represent the votes depends on the number of voters and number of contesting candidate.These setups are done by the election officials prior  to the election process.The decoding of votes is a simple binary conversion which can also be done easily.
The integrity of the share sum maintained by each centre is achieved by implementing a digital signature scheme.This can also be efficiently implemented
using any digital signature scheme \cite{atreya2002digital}. 

The algorithm is computationally efficient and the complexity involved  depends on the share generation during the voting and the communication with the Collection Centres .The number of shares are usually small and hence the share generation using polynomial evaluation is simple.The secure communication between the voting terminal and the collection centre is a more challenging.Separate communication module can be incorporated to do it efficiently.The collection centre must also be capable of handling requests from large number of voting terminals.Region wise collection centres can be incorporated to balance the load and update the top level collection centre data in a periodic manner.The result analysis needs the polynomial interpolation but is done only once and it doesnt add much complexity to the performance of the system.

\section{Conclusions}
\label{section:conclusions}
The E-voting scheme using Shamir's secret sharing homeomorphism is a first proposal which helps to obtain not only the election result but also the votes gained by each candidate with encoding and decoding of votes in a typical manner.The proposal of secret sharing homomorphism was suggested by several authors, however true or false voting mechanism is mentioned.The proposed algorithm generalize the use of secret sharing homomorphism to E-voting which provides secrecy, computational efficiency, trust and reliability.The system does not also leave any trace of the vote made by a voter.

The strong requirement of the scheme mentioned here is a secure channel for sending shares.The shares can be send through different channels to different collection centres.The intruder have to get access to $t$ different channels for breaking the security of the scheme.For additional security, the shares can also be encrypted by using the public keys of the collection centre.There are several homomorphic encryptions which support this or ordinary encryption decryption can be used.

The system works efficiently for a moderate election with less number of voters.
If the number of voters and candidates are more, the encoded vote will have a large value and the system has to chose a field of large size.This will large share size and too much communication overhead.This can be avoided by breaking the encoded vote into smaller code and makes shares of it.However the complexity involved in the implementation will increase.

We have done a preliminary implementation of the scheme using java \cite{nair2015improved}.Additional modules are incorporated as per the requirement.Another feature that can be incorporated is the implementation of digital signature scheme,which ensures integrity and authenticity of the shares.Verifiable secret sharing techniques can also be incorporated which ensures the consistency of the shares, however it slow down the system performance. We are looking for a more sophisticated implementation guaranteeing authentication using mobile phones and OTP(One Time Password) for all the users using adhar details.Instead of voting terminals every one can vote using the registered mobile which is our future plan.

\end{document}